\newtheorem{lemma}{\underline{Lemma}}[section]
\newtheorem{remark}{\underline{Remark}}[section]
\newcommand{\mv}[1]{\mbox{\boldmath{$ #1 $}}}
\newtheorem{proof}{Proof}[section]
\begin{document}
\title{Energy-Efficient Federated Edge Learning with Joint Communication and Computation Design}
\author{{Xiaopeng Mo and Jie Xu}
\thanks{ X. Mo is with the School of Information Engineering, Guangdong University of Technology, Guangzhou 510006, China. He is also with the Future Network of Intelligence Institute (FNii), The Chinese University of Hong Kong, Shenzhen, Shenzhen 518172, China (e-mail: xiaopengmo@mail2.gdut.edu.cn).}
\thanks{J. Xu is with the Future Network of Intelligence Institute (FNii) and the School of Science and Engineering, The Chinese University of Hong Kong, Shenzhen, Shenzhen 518172, China (e-mail:
xujie@cuhk.edu.cn). J. Xu is the corresponding author.}
}
\maketitle
\begin{abstract}
 This paper studies a federated edge learning system, in which an edge server coordinates a set of edge devices to train a shared machine learning (ML) model based on their locally distributed data samples. During the distributed training, we exploit the joint communication and computation design for improving the system energy efficiency, in which both the communication resource allocation for global ML-parameters aggregation and the computation resource allocation for locally updating ML-parameters are jointly optimized. In particular, we consider two transmission protocols for edge devices to upload ML-parameters to edge server, based on the non-orthogonal multiple access (NOMA) and time division multiple access (TDMA), respectively. Under both protocols, we minimize the total energy consumption at all edge devices over a particular finite training duration subject to a given training accuracy, by jointly optimizing the transmission power and rates at edge devices for uploading ML-parameters and their central processing unit (CPU) frequencies for local update. We propose efficient algorithms to optimally solve the formulated energy minimization problems by using the techniques from convex optimization. Numerical results show that as compared to other benchmark schemes, our proposed joint communication and computation design significantly improves the energy efficiency of the federated edge learning system, by properly balancing the energy tradeoff between communication and computation.
\end{abstract}
\begin{IEEEkeywords}
Federated edge learning, energy efficiency, joint communication and computation design, resource allocation, non-orthogonal multiple access (NOMA), optimization.
\end{IEEEkeywords}
\section{Introduction}
Artificial intelligence (AI) and machine learning (ML) have found abundant applications in, e.g., computer vision, recommendation systems, and natural language processing. The recent success of AI and ML depends on various factors such as the development of new algorithms (e.g., deep learning \cite{deep_learning}), the availability of massive data and the exponential increase of computation power. Normally, training proper AI/ML models requires huge computation power and massive training data. Therefore, the training of AI/ML models is conventionally implemented at the cloud or data center that has strong computation and storage capability \cite{cloud_resource}.

With recent technical advancements in Internet of Things (IoT) and fifth-generation (5G) celluar networks, massive data are generated by end devices (such as smart sensors, IoT devices, and smart phones) and  cellular base stations (BSs) at the wireless network edge. In this case, the conventional centralized cloud learning may not be applicable any longer, as the collection of big data from massive edge devices to the central cloud is costly and may cause extremely high traffic loads in communication networks. Motivated by the recent development of mobile edge computing (see, e.g., \cite{kaibing_survey}), a new paradigm of distributed mobile edge learning has been proposed (see, e.g., \cite{edge_learning,edge_learning2,edge_learning3,JCIN1}), in which massive edge devices are enabled to cooperate in training shared ML models by exploiting their locally distributed data and computation power. By pushing ML tasks from far-apart cloud to nearby edge, the mobile edge learning technique can significantly decrease the end-to-end latency and considerably reduce the traffic loads in communication networks. It is envisioned that the mobile edge learning will be a key technology in the beyond-fifth-generation (B5G) or sixth-generation (6G) cellular networks for enabling new applications such as autonomous driving, virtual reality (VR) and augmented reality (AR), thus realizing the network intelligence vision \cite{5g_beyond}.

Among different mobile edge learning approaches, federated edge learning is particularly appealing (see, e.g., \cite{fl,f2,f3,google_keyboard,FL_white_paper,data_gradient_compression,data_compression,S.Wang_Edge_learning,client_selection,low_latency_fl,over_the_air,over_the_air1,hierarchical_FL}), which allows a central node (such as an edge server) to coordinate a large number of edge devices to cooperate in training shared ML models based on their locally distributed data samples. Generally speaking, the objective of federated edge learning is to find optimized ML-parameters by minimizing the loss function via distributed optimization. In particular, the distributed gradient descent method has been widely adopted to solve the loss-function minimization problem, which is implemented in an iterative manner as follows. At each (outer) iteration, the edge server first broadcasts the global ML-parameters to edge devices, such that all participating edge devices can synchronize their  local ML-parameters; next, each edge device individually updates its local ML-parameters by computing the gradients based on their own data samples, where the local update is normally implemented over several (inner) iterations to speed up the convergence\cite{S.Wang_Edge_learning}; then, after local update, the edge devices upload their local ML-parameters to the edge server, such that the edge server can aggregate them to obtain an updated global ML-parameters. As the above procedures proceed, the edge server can obtain converged global ML-parameters that correspond to the desirable ML-model. For convenience, we refer to the above outer and inner iterations as global and local iterations, respectively. As no explicit data sharing from edge devices is required, the federated edge learning efficiently preserves the data privacy and security for edge devices. Notice that the ML-parameters are frequently exchanged between the edge server and edge devices, and as a result, the performance of federated edge learning is fundamentally constrained by the communication between the edge server and edge devices, especially when they are connected by wireless links that are unstable and may fluctuate significantly over time. This thus calls for a new design principle for federated edge learning in an interdisciplinary manner from both computer science and wireless communications perspectives.

In particular, the implementation of federated edge learning over wireless networks faces the following technical challenges. First, due to the involvement of both communication and computation, how to jointly  design them for optimizing the ML-training performance in terms of the training speed and accuracy is a new problem to be tackled. This problem is particularly difficult. This is due to the fact that the ML-training performance depends on many different factors (such as the employed ML algorithms and the data distribution among edge devices), and there does not exist a generic analytic relationship between the ML-training performance metric and the communication/computation parameters. Next, as the edge devices are normally powered by battery with finite sizes, their limited energy supply is another challenge to be dealt with, especially when the trained ML-model contains a large number of parameters, leading to heavy computation and communication loads. In addition, the federated edge learning faces the so-called straggler's dilemma issue, i.e., the ML-training performance is limited by the slowest edge devices in communication and computation\cite{edge_learning3}.

In the literature, there have been some prior works \cite{data_gradient_compression,data_compression,S.Wang_Edge_learning,client_selection,low_latency_fl,over_the_air,over_the_air1,hierarchical_FL} investigating the communication-constrained federated edge learning systems from different perspectives. For example, in \cite{data_gradient_compression}\cite{data_compression}, the authors proposed gradient compression methods to accelerate the training speed by reducing the required communication cost for exchanging gradients. \cite{S.Wang_Edge_learning} optimized the numbers of global and local iterations to maximize the ML-training accuracy, subject to the communication resource constraints. \cite{client_selection} considered the heterogeneity of wireless channels at different edge devices, based on which a client selection algorithm was proposed to improve the efficiency of ML-model training, in which only the edge devices with good communication and computation qualities are chosen for avoiding the straggler's dilemma. Furthermore, the so-called over-the-air computation approach\cite{kaibing_aircomp}\cite{xiaowen_aircomp} is utilized in the federated edge learning systems \cite{low_latency_fl,over_the_air,over_the_air1}, in which the superposition property of wireless multiple-access channels is exploited for speeding up the global ML-model aggregation. In addition, \cite{hierarchical_FL} further considered a hierarchical federated learning structure integrating devices, edge and cloud, in which multiple edge servers are allowed to perform partial ML-model aggregation to speed up the training process by reducing the direct communication rounds from the edge devices to the cloud. Despite such research progresses, however, the energy-efficient federated edge learning design by taking into account both communication and computation still remains a topic that is not well addressed. This thus motivates our investigation in this work.

 This paper considers a federated edge learning system consisting of one edge server and  multiple edge devices. With the coordination of the edge server, the edge devices use the distributed batch gradient descent (BGD) method to train a shared ML-model. Suppose that the ML-model training is subject to a given training delay requirement. Our objective is to minimize the energy consumption at edge devices during the training, by jointly designing their communication resource allocation (i.e., the transmission power and corresponding rates) for global ML-parameters uploading and aggregation, and computation resource allocation (i.e., the central frequency unit (CPU) frequencies) for local ML-parameters updates.

 In particular, we present both communication and computation energy consumption models, which are functions with respect to the numbers of local and global iterations, and the communication and computation loads at each iteration. We also consider two transmission protocols for the edge devices to upload their local ML-parameters to edge server, namely non-orthogonal multiple access (NOMA) and time division multiple access (TDMA), respectively. Under the two transmission protocols, the formulated energy minimization problems are both non-convex and difficult to solve in general. To tackle this issue, we transform them into convex forms and then present efficient algorithms to solve them optimally. Numerical results reveal interesting tradeoffs among energy consumption, training speed and training accuracy. It is shown that our proposed joint communication and computation design achieves significant performance gains over other benchmark schemes without such joint optimization. It is also shown that properly choosing the numbers of global and local iterations can efficiently balance the communication-computation tradeoff for further enhancing the system energy efficiency.

Note that to our best knowledge, there have been two prior works \cite{energy_efficient_radio}\cite{energy_efficient_walid_saad} considering the energy-efficient communication design for federated edge learning systems, which are different from this paper in the following aspects. First, \cite{energy_efficient_radio} only focused on the communication resource allocation, while this paper considers both communication and computation resource allocations. Next, different from \cite{energy_efficient_walid_saad} that aimed to minimize the weighted sum of training delay and energy consumption, this paper aims to minimize the energy consumption subject to training delay constraints. While \cite{energy_efficient_walid_saad} used a (loose) upper bound of the loss function to represent the training accuracy and to determine the numbers of local and global iterations, this paper uses simulations to accurately reveal the effects of the numbers of local and global iterations on training accuracy, and further shows the communication-computation energy tradeoff in choosing these parameters. Also note that the joint communication and computation resource allocations have been extensively investigated in the MEC literature (see, e.g., \cite{mec_energy_efficient,mec_energy_efficient1,wangfeng_mec,caoxiaowen_mec,noma_trans_time,JCIN2}). Nevertheless, the MEC studies normally focused on the task execution delay as the performance metric under general
computation task models; while in the federated edge learning in this paper, we are interested in specific computation tasks for training ML-models, for which the training speed and accuracy are used as key performance measures.

The remainder of this paper is organized as follows. Section II presents the system model of our considered federated edge learning system. Section III formulates the joint communication and computation resource allocation problems for minimizing the total energy consumption at edge devices. Sections IV and V propose optimal solutions to the energy minimization problems under the NOMA and TDMA transmission protocols, respectively. Section VI provides numerical results to validate the effectiveness
of our proposed joint designs. Finally, Section VII concludes this paper.
\begin{figure}
  \centering
 \vspace{0em}
  \includegraphics[width=11cm]{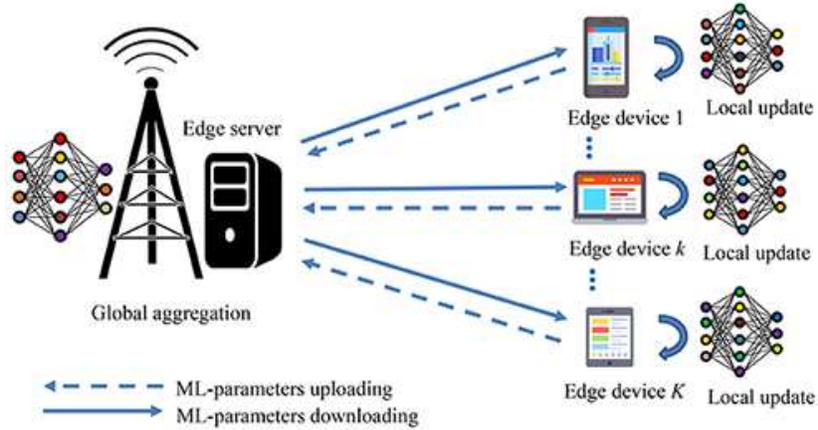}
    \small\caption{\small Illustration of the federated edge learning system with one edge server coordinating multiple edge devices to train shared ML-models.}
    \label{fig1}
     \vspace{0em}
\end{figure}
\section{System Model}
In this work, we consider a federated edge learning system consisting of an edge server and a set $\mathcal{K}\triangleq\{1,...,K\}$ of edge devices, as shown in Fig. \ref{fig1}. The edge server and edge devices are each deployed with one single antenna. In this system, the edge server coordinates $K$ edge devices to train a shared ML-model (such as linear regression, support vector machine (SVM), and deep neural networks
(DNN)) by using their local data. Let $\mv w$ denote the ML-parameters in the ML model to be trained, $\mathcal{D}_k$ denote the set of data samples distributed at edge device $k$. Accordingly, we use $f_i(\mv w)$ to denote the loss function
for each data sample $i\in\mathcal{D}_k$. For instance, for linear regression, by letting $\mv x_i$ and $y_i$ denote the input vector and the desired output scalar for each data sample $i$, then the loss function can be expressed as $f_i(\mv w) = \frac{1}{2}| y_i-\mv w^T \mv x_i|^2$, where $|\cdot|$ denotes the absolute value of a real number.\footnote{ Please refer to \cite{loss_function1,loss_function2} for details about the loss functions for SVM, K-means and convolutional neural network (CNN).} In this case, the average loss function at edge device $k\in \mathcal{K}$ is given by
\begin{align}
F_k(\mv w)\triangleq\frac{1}{|\mathcal{D}_k|}\sum_{i\in\mathcal{D}_k}f_i(\mv w),\label{local_loss}
\end{align}
where $|\mathcal{D}_k|$ denotes the cardinality of set $\mathcal{D}_k$. Accordingly, the average loss function for all the $K$ edge devices is given as
\begin{align}
F(\mv w)=\frac{\sum_{k=1}^{K}|\mathcal{D}_k|F_k(\mv w)}{\sum_{k=1}^{K}|\mathcal{D}_k|}.
\end{align}
In federated edge learning, the objective is to find desirable ML-parameters that minimize the average loss function $F(\mv w)$, i.e.,
\begin{align}
\mv w^*= \arg\min\limits_{\mv w} F(\mv w).\label{loss_function}
\end{align}
 Towards this end, we use the distributed BGD that is implemented in a distributed manner. Suppose that $M$ and $N$ denote the numbers of global and local iterations during the distributed BGD, respectively. For each global iteration $i\in\{1,...,M\}$, let $\mv w^{(0)}$ denote the initial global ML-parameters at the edge server, $\mv w^{(i)}$ denote the global ML-parameters at the edge server after the $i$-th global iteration, and $\mv w^{(i,j)}_k$ denote the local ML-parameters at edge device $k$ after the $j$-th local iteration of the $i$-th global iteration. We also use $S$ to denote the required bits for sending $\mv w_k^{(i,N)}$ to the edge server, which generally depends on the quantization and compression methods used.
Then, the following procedure is implemented at each global iteration $i$.
\begin{enumerate}
  \item {\it Edge server broadcasts global ML-parameters to devices}: The edge server broadcasts the global model ML-parameters $\mv w^{(i-1)}$ to the $K$ edge devices, where the local ML-parameters at each edge device $k$ are set as $\mv w^{(i-1)}$, i.e., $\mv w_k^{(i,0)} = \mv w^{(i-1)}, \forall k\in \mathcal K$.
  \item  {\it  Each edge device locally updates ML-parameters iteratively}: Each edge device $k\in \mathcal{K}$ updates its local ML-parameters in an iterative manner based on the gradient of average local loss function $F_k(\mv w)$. In each local iteration $j\in\{1,...,N\}$, supposing that the gradient of $F_k(\mv w)$ at point $\mv w$ is expressed as $\nabla F_k(\mv w)$, then we have $\mv w_k^{(i,j)}= \mv w_k^{(i,j-1)}+\eta \nabla F_k(\mv w_k^{(i,j-1)})$, where $\eta$ denotes the learning step size.
  \item  {\it Edge devices upload local ML-parameters to server}:  The $K$ edge devices upload their updated local ML-parameters $\mv w_1^{(i,N)},...,\mv w_K^{(i,N)}$ to the edge server.
  \item  {\it Edge server updates global ML-parameters via aggregation}: The edge server aggregates all the uploaded local ML-parameters $\mv w_k^{(i,N)}$ from $K$ edge devices and updates the global ML-parameters by averaging them, i.e., $\mv w^{(i)}=\frac{\sum_{k\in\mathcal{K}}\mv w^{(i, N)}_k}{K}$.
\end{enumerate}
  After the $M$ global iterations, the global ML-parameters $\mv w^{(M)}$ at the edge server are set as the desirable solution to problem (\ref{loss_function}), i.e., $\mv w^* \gets \mv w^{(M)}$. Notice that the performance of the federated edge learning depends on the numbers of global and local iterations $M$ and $N$. In general, larger values of $M$ and $N$ lead to smaller loss function and higher training accuracy, but also result in larger communication and computation energy consumption as well as longer training delay. Furthermore, to achieve the same training accuracy, it is possible to increase the number of local iterations $N$ with higher {\it computation} energy consumption and delay for trading for a smaller number of global iterations $M$ with lower {\it communication} energy consumption and delay. Therefore, there generally exist interesting performance tradeoffs in choosing $M$ and $N$ to balance among the training speed, accuracy, and energy consumption. We will leave the detailed discussion on choosing $M$ and $N$ in section VI, and will focus on the joint communication and computation resource allocation under given $M$ and $N$ in Sections III-V unless stated otherwise.

  Specifically, in this paper, we focus on the energy-efficient operation of energy-hungry edge devices for federated edge learning, and as a result, we omit the communication and computation energy consumptions at the edge server, as well as the time delay for global ML-parameters broadcasting and aggregation in steps 1) and 4), respectively.

In the following, we explain the computation energy consumption at edge devices for updating local ML-parameters in step 2) and the communication energy consumption for uploading local ML-parameters to the edge server in step 3), respectively.

\subsection{Local ML-Parameters Update at Edge Devices}
First, we consider the local ML-parameters update at one particular global iteration $i$, in which each edge device $k$ updates its local ML-parameters $N$ times, by computing  $\mv w_k^{(i,j)}= \mv w_k^{(i,j-1)}+\eta \nabla F_k(\mv w_k^{(i,j-1)}),\forall j\in\{1,...,N\}$. In order to model the energy consumption and delay for such computation, we use the number of floating point operations (FLOPs) to measure the computation complexity. For analytical convenience, it is assumed that the number of FLOPs\footnote{ How to obtain an accurate estimate of the number of FLOPs for each local update is an important task. Here, we explain how to estimate that for CNN, while those for other ML models (such as SVM and DNN) can be similarly obtained. In particular, a CNN  generally consists of multiple fully-connected layers and convolutional layers. For each fully-connected layer, the number of FLOPs can be easily evaluated as the product of the inputs and outputs. For each convolutional layer, the number of FLOPs can be estimated as the product of various factors including the spatial width and height of the feature map, previous and current layers, and the width and height of the Kernel. Please refer to \cite{FLOPs} for more details and examples.} needed for each data sample per local update is a constant, denoted by $a$. As a result, the total number of FLOPs required at edge device $k\in\mathcal{K}$ is given as $F_k = a\times |\mathcal{D}_k|$. In order to efficiently reduce the energy consumption at each edge device, the dynamic voltage and frequency scaling (DVFS) technique \cite{DVFS} is applied to adaptively adjust the CPU frequency to perfectly match the computation demand. For each edge device $k\in\mathcal{K}$, let $C_k$ denote the number of FLOPs within a CPU cycle and $f_{k}$ denote the CPU frequency of the whole operation duration, where $f_{k}\in(0,f_k^{\text{max}}]$, and $f_k^{\text{max}}$ denotes the maximum CPU frequency at edge device $k$. Accordingly, the computation time duration for each local update at edge device $k$ is given by
\begin{align}
t_k^{\text{loc}}=\frac{F_k}{C_k}\frac{1}{f_{k}}.
\end{align}

 For each edge device $k\in\mathcal{K}$, the energy consumption for local ML-parameters update mainly comes from that consumed by the CPU. It has been shown in \cite{kaibing_survey} that the CPU power consumption is proportional to the square of CPU chip's voltage $V^2$ and the operating CPU clock frequency $f$, where the voltage is approximately linear with respect to the CPU frequency\cite{power_linear_model}.
 Therefore, the total energy consumption at edge device $k\in\mathcal{K}$ for local ML-parameters update is given by\cite{mec_energy_efficient1}
\begin{align}
E_{k}^{\text{loc}}
=\frac{F_k}{C_k}\varsigma_kf_{k}^2,
\end{align}
where $\varsigma_k$ is a constant coefficient that depends on the chip architecture at edge device $k\in\mathcal{K}$.
\subsection{Local ML-Parameters Uploading from Edge Devices to Server}
 Next, we consider the local ML-parameters uploading from the $K$ edge devices to the edge server. For the purpose of initial investigation, we consider a quasi-static frequency non-selective channel model, in which the wireless channels are assumed to remain unchanged during the whole training process.
In particular, we consider two transmission protocols for the $K$ edge devices to upload their ML-parameters to the edge server, namely NOMA and TDMA, respectively.
\subsubsection{NOMA-Based Transmission}
First, we consider the NOMA-based transmission protocol, in which the $K$ edge devices simultaneously upload their local ML-parameters to the server. Let $t_{\text{up}}$ denote the transmission duration. Suppose that the transmit signal at edge device $k\in\mathcal{K}$ is $x_k$, which is a circularly symmetric complex Gaussian (CSCG) random variable with zero mean and unit variance. Then the received signal at the edge server is given by
\begin{align}
y=\sum_{k=1}^{K}\sqrt{p_k h_k}x_k+z,
\end{align}
where $p_k\geq0$ denotes the transmission power at edge device $k$, $h_k$ denotes the channel power gain from edge device $k$ to the edge server, and $z$ denotes the additive white Gaussian noise (AWGN) at the edge server that is a CSCG random variable with mean zero and variance $\sigma^2$, i.e., $z\sim \mathcal{C} \mathcal{N}(0,\sigma^2)$.

At the receiver side, the edge server adopts the minimum mean squared error successive interference cancellation (MMSE-SIC)\cite{noma_trans_time}\cite{Tse_Wireless}\cite{Tse_MAC} to decode information from the $K$ edge devices, where the edge server decodes the uploaded ML-parameters from the edge devices following a $\emph{successive decoding order}$, denoted by $\mv \pi$. Specifically, the edge server first decodes the information $x_{\pi(K)}$ from edge device $\pi(K)$, then decodes the information $x_{\pi(K-1)}$ by cancelling the interference from $x_{\pi(K)}$, followed by $x_{\pi(K-2)}$, ..., until decoding $x_{\pi(1)}$. Under Gaussian signalling and with a given decoding order $\mv \pi$ and transmission power $\{p_k\}$, the achievable rate (in bits per second) at edge device $\pi(k), k\in\mathcal{K}$, is given by\cite{Tse_Wireless,Tse_MAC}
\begin{align}
r_{\pi(k)}^{(\pi)}=B\log_2
\left(\frac{\sigma^2+\sum_{n=1}^{k}p_{\pi(n)} h_{\pi(n)}}
{\sigma^2+\sum_{n=1}^{k-1}p_{\pi(n)} h_{\pi(n)}}\right),\label{rate}
\end{align}
where $B$ denotes the transmission bandwidth. By properly designing the decoding order and employing time-sharing among different decoding orders, the achievable rate region for the $K$ edge devices (or equivalently the capacity region of the multiple-access channel) is given by\cite{Tse_Wireless,Tse_MAC}
\begin{align}
\mathcal{R}_{\text{NOMA}}(\mv p)=\left\{\mv r \in\mathbb{R}_+^{K\times1}\big|
\sum_{k\in\mathcal{\bar{K}}} r_k\leq B \log_2\big(1+\frac{\sum_{i\in\mathcal{\bar{K}}}p_k h_k}{\sigma^2}\big),\forall\mathcal{\bar{K}}\subseteq\mathcal{K}\right\},
\end{align}
where $\mv r \triangleq [r_1,...,r_K]^\dag$, $\mv p \triangleq [p_1,...,p_K]^\dag$, and $\mathbb{R}_+^{K\times1}$ denotes the set of all non-negative real vectors with dimension $K$.
 Here, the superscript $\dagger$ denotes the transpose. Supposing that each edge device $k\in\mathcal{K}$ needs to accomplish its ML-parameters uploading task within the duration $t_{\text{up}}$ that is to be optimized, we can obtain the following transmission constraint
\begin{align}
r_k t_{\text{up}}\geq S, \forall k\in\mathcal{K},
\end{align}
where recall that $S$ is the number of bits required for each edge device to send the local ML-parameters to edge server.
 The communication energy consumption at edge device $k$ for uploading is thus given by $E_k^\text{NOMA}=p_k t_{\text{up}}, k\in\mathcal{K}$.

 By combining the time delay for both local ML-parameters updates and uploading, and ignoring the delay for global ML-parameters broadcasting and aggregation, the total delay for training the ML-model is thus
\begin{align}
T^\text{NOMA}=M(N t_{\text{loc}}+t_{\text{up}}),
\end{align}
where the communication for uploading is implemented over $M$ rounds (or global iterations), and the computation for local updates at each device is implemented $M\times N$ times in total.
\subsubsection{TDMA-Based Transmission}
Next, we consider the TDMA-based transmission protocol, where the $K$ edge devices upload their updated local ML-parameters to the edge server over orthogonal time resources. Let $t_k^{\text{up}}$ denote the local ML-parameters uploading duration allocated for edge device $k \in \mathcal{K}$. By letting $p_k$ denote the transmission power at edge device $k$, the achievable rate of this device is given as
$B\log_2(1+\frac{p_kh_k}{\sigma^2}), k\in\mathcal{K}$.
In order for each edge device $k$ to successfully upload $S$ bits to edge server over duration $t_k^{\text{up}}$, we have
\begin{align}
B\log_2\big(1+\frac{p_kh_k}{\sigma^2}\big)t_k^{\text{up}}\geq S, \forall k\in\mathcal{K}.
\end{align}
 The communication energy consumption at each edge device $k$ for uploading is thus given by $E_k^\text{TDMA}=p_k t_k^{\text{up}}$. Accordingly, the total training delay is given by
 \begin{align}
 T^\text{TDMA}=M(N t_{\text{loc}}+\sum_{k=1}^K t_{k}^{\text{up}}).
 \end{align}
\section{Problem Formulation}
 Our objective is to minimize the total energy consumption at the $K$ edge devices, subject to a maximum training delay constraint $T$. The decision variables include the communication and computation resource allocation, i.e., the transmission power $\mv p$ and rate $\mv r$, and the CPU frequency $\mv f\triangleq[f_{1},...,f_{K}]^\dag$. In Sections III-V, in order to focus our study on the communication and computation resource allocations, we fix the numbers of global and local iterations $M$ and $N$. Note that the values of $M$ and $N$ are properly chosen in order to balance the communication-computation energy tradeoff while ensuring a certain training accuracy, as will be shown in Section VI.

 First, we consider the case with NOMA transmission protocol. In this case, the training-delay-constrained energy minimization problem is formulated as
\begin{align}
\text{(P1)}:\min_{\mv r,\mv f,\mv p, t_{\text{loc}}, t_{\text{up}}} &M\sum_{k=1}^{K}\left(N\frac{F_k}{C_k}\varsigma_kf_{k}^2
+p_k t_{\text{up}}
\right)
\nonumber\\
\text{s.t.}~~
&\mv r \in \mathcal{R}_{\text{NOMA}}(\mv p)\label{NOMA}\\
&r_k t_{\text{up}}\geq S, \forall k\in\mathcal{K}\label{transmitrate}\\
&M(N t_{\text{loc}}+t_{\text{up}})\leq T\label{timecontraint} \\
&t_{\text{loc}}\geq\frac{F_k}{C_k}\frac{1}{f_{k}},\forall k\in\mathcal{K}\label{max_t_loc}\\
&0\leq f_{k}\leq f_k^{\text{max}},\forall k\in\mathcal{K}\label{frequency}\\
&0\leq p_k \leq P_{\text{max}},\forall k\in\mathcal{K} \label{powercontraint},
\end{align}
where constraints (\ref{NOMA}) and (\ref{transmitrate}) ensure the uploading of local ML-parameters at the $K$ edge devices, constraint (\ref{timecontraint}) specifies the training delay requirement, and constraint (\ref{max_t_loc}) characterizes the computation requirements for local updates at the $K$ devices.

Next, we consider the case with TDMA transmission protocol. In this case, the training-delay-constrained energy minimization problem is formulated as
\begin{align}
\text{(P2)}:\min_{\mv f,\mv p, t_{\text{loc}},\{t_{k}^{\text{up}}\}} &M\sum_{k=1}^{K}\left(N\frac{F_k}{C_k}\varsigma_kf_{k}^2
+p_k t_k^{\text{up}}
\right)
\nonumber\\
\text{s.t.}~~~
&B\log_2(1+\frac{p_kh_k}{\sigma^2}) t_{k}^{\text{up}}\geq S, \forall k\in\mathcal{K}\label{TDMA}\\
&M(N t_{\text{loc}}+\sum_{k=1}^K t_{k}^{\text{up}})\leq T \label{tdma_delay}\\
&(\ref{max_t_loc}),(\ref{frequency}),(\ref{powercontraint}).\nonumber
\end{align}

Note that in both problems (P1) and (P2), there generally exists an tradeoff between  communication and computation in reducing the energy consumption. It is observed that the edge devices can increase the transmission power $\mv p$ (or {\it increase the communication energy consumption}) to reduce the communication time for uploading ML-parameters. Under the training delay constraint in (\ref{timecontraint}), this in turn allows these devices to increase the time duration for local updates with reduced CPU frequencies $\mv f$, thus leading to {\it reduced computation energy consumption}. Therefore, how to properly design $\mv p$, $\mv r$ and $\mv f$ to optimally balancing such a tradeoff is essential for minimizing the total energy consumption.

Also note that problems (P1) and (P2) are generally challenging to be solved as both of them are not convex due to the coupling of $p_k$ and $t_{\text{up}}$ / $t^{\text{up}}_k$. Furthermore, in problem (P1) the constraint in (\ref{NOMA}) corresponds to a number of ($2^{K}-1$) inequality constraints, thus making (P1) more difficult to be solved, especially when $K$ becomes large. Before proceeding to address problems (P1) and (P2) in Sections IV and V, respectively, we first check their feasibility in the following, i.e., checking whether these edge devices can efficiently accomplish the ML-model training task within the delay $T$.

\subsection{Feasibility Checking for Problem (P1)}
  Checking the feasibility of problem (P1) corresponds to showing whether these devices are able to accomplish the ML-model training within delay $T$. Therefore, this is equivalent to minimizing the total training delay by solving the following problem:
\begin{align}
\min_{\mv r, \mv f, \mv p, t_{\text{loc}}, t_{\text{up}}} &~~M(N t_{\text{loc}}+t_{\text{up}})\label{feasibility_noma}\\
\text{s.t.}~~
~~&(\ref{NOMA}),(\ref{transmitrate}),(\ref{max_t_loc}),(\ref{frequency}),(\ref{powercontraint}).\nonumber
\end{align}
It is evident that the minimum training delay is attained when the edge devices use the largest CPU frequency and the highest transmission power, i.e., $f_k=f_k^{\text{max}}, p_k=P_{\text{max}}, \forall k \in \mathcal{K}$. In this case, we have $t_{\text{loc}}=\max\limits_{k\in\mathcal{K}}\{\frac{F_k}{C_k}\frac{1}{f_k^{\text{max}}}\}$. However, it still remains to find the decoding orders at the edge server to determine the time delay for edge devices to upload the local ML-parameters to the edge server. This corresponds to solving the following problem:
\begin{align}
\min_{\mv r, \mv p, t_{\text{up}}} ~~~~&t_{\text{up}}\label{aa}\\
\text{s.t.}~~~~~&p_k=P_{\text{max}},\forall k \in \mathcal{K}\label{max_p}\\
&(\ref{NOMA}),(\ref{transmitrate}).\nonumber
\end{align}
It can be easily shown that solving problem (\ref{aa}) is equivalent to solving the following problem to maximize the minimum or common achievable communication rate among the $K$ edge devices, which has been optimally solved in \cite{noma_trans_time}.
\begin{align}
\max_{\mv r, \mv p, \bar r} &~~~\bar r\label{common rate}\\
\text{s.t.}~~
&r_k\ge \bar r\nonumber\\
&(\ref{NOMA})(\ref{max_p}).\nonumber
\end{align}
Let the optimal solution to problem (\ref{common rate}) as $\bar r^*, r_k^*=\bar r^*$, and $p_k^*=P_{\text{max}}, \forall k\in\mathcal{K}$. Then the  minimum communication delay is given as $t_{\text{up}}=\frac{S}{\bar r^*}$. Accordingly, we obtain the minimum training delay to problem (\ref{feasibility_noma}) as
 \begin{align}
 T_{\text{min}}^{\text{NOMA}}=M\left(N\times\max_{k\in\mathcal{K}}\big\{\frac{F_k}{C_k}\frac{1}{f_k^{\text{max}}}\big\}+\frac{S}{\bar r^*}\right).
\end{align}
If $ T_{\text{min}}^{\text{NOMA}}\leq T$, then edge devices are able to accomplish the communication task within duration $T$, i.e., problem (P1) is feasible. Otherwise, problem (P1) is infeasible.
\subsection{Feasibility Checking for Problem (P2)}
Similar as in Section III-A, in order to check the feasibility of problem (P2), we minimize the
training duration problem under the TDMA transmission protocol, for which the problem is formulated as
\begin{align}
\min_{\mv r,\mv f,\mv p, t_{\text{loc}}, \{t_{k}^{\text{up}}\}} &M(Nt_{\text{loc}}+\sum_{k=1}^Kt_{k}^{\text{up}})\label{c}\\
\text{s.t.}~~
&(\ref{max_t_loc}),(\ref{frequency}),(\ref{powercontraint}),(\ref{TDMA}).\nonumber
\end{align}
It is easy to show that the minimum training duration is attained when all $K$ edge devices use the maximum CPU frequency and the maximum transmission
power, i.e., $f_k=f_k^{\text{max}}, p_k=P_\text{max}, \forall k \in \mathcal{K}$. Hence, the minimum training duration under the TDMA transmission protocol is given by
\begin{align}
T_{\text{min}}^{\text{TDMA}}=M\left(N\times\max_{k\in\mathcal{K}}\big\{\frac{F_k}{C_k}\frac{1}{f_k^{\text{max}}}\big\}+\sum_{k\in\mathcal{K}}\frac{S}{B\log_2(1+\frac{P_{\text{max}}h_k}{\sigma^2})}\right).
\end{align}
It is thus concluded that if $T_{\text{min}}^{\text{TDMA}}\leq T$, then problem (P2) is feasible. Otherwise, problem (P2) is infeasible.

\begin{remark}\label{outperform} Notice that the achievable rate region under the NOMA transmission is always superior to that under the TDMA transmission\cite{Tse_MAC}. Therefore, the NOMA transmission protocol always leads to lower minimum training delay than the TDMA counterpart. Furthermore, it can be shown that any feasible solution of transmission power/rate and CPU frequencies to problem (P2) under the TDMA case is also an feasible solution to (P1) under the NOMA case, but the opposite may not be true. Therefore, it is expected that the NOMA transmission also achieves lower energy consumption than the TDMA transmission, as will be validated in Section VI.
\end{remark}
\section{Optimal Solution to Problem (P1) under NOMA}
 In this section, we propose an efficient algorithm to solve problem (P1) optimally. We first transform problem (P1) into a convex form and then obtain the optimal solution by employing the Lagrange duality method.
\subsection{Transformation of Problem (P1) into Convex Form}
 We first deal with the non-convexity of problem (P1).
 Towards this end, we introduce $e_k=p_kt_{\text{up}},$ and $s_k= r_k t_{\text{up}}, \forall k\in\mathcal{K}$, and accordingly define $\mv e\triangleq[e_1,...,e_K]^\dag$ and $\mv s\triangleq[s_1,...,s_K]^\dag$. By replacing $p_k=\frac{e_k}{t_{\text{up}}}$ and $r_k=\frac{s_k}{t_{\text{up}}}, \forall k\in \mathcal{K}$, we transform  problem (P1) into the following equivalent form:
 \begin{align}
\text{(P1.1)}:
\min_{ \mv s, \mv f,\mv e, t_{\text{loc}}, t_{\text{up}}} &M\sum_{k=1}^{K}\left(N\frac{F_k}{C_k}\varsigma_kf_{k}^2
+e_k\right)
\nonumber\\
\text{s.t.}~
&\mv s \in \mathcal{C}(\mv e, t_{\text{up}})  \label{transformNOMA}\\
&s_k\geq S,\forall k\in\mathcal{K}\label{a}\\
&0\leq e_k \leq P_{\text{max}}t_{\text{up}},\forall k\in\mathcal{K}\\
&(\ref{timecontraint}),(\ref{max_t_loc}),(\ref{frequency}),\nonumber
\end{align}
where
\begin{align}
\mathcal{C}(\mv e, t_{\text{up}})&=\left\{\mv s \in\mathbb{R}_+^{K\times1}\big|
\sum_{k\in\mathcal{\bar{K}}} s_k\leq B \log_2(1+\frac{1}{t_{\text{up}}}\frac{\sum_{k\in\mathcal{\bar{K}}}e_k h_k}{\sigma^2})t_{\text{up}},
\forall\mathcal{\bar{K}}\subseteq\mathcal{K}\right\}.\label{region}
\end{align}
 Notice that in (\ref{region}), the right-hand-side (RHS) of each inequality inside the set corresponds to a concave perspective function, and therefore, $\mathcal{C}(\mv e, t_{\text{up}})$ is a convex set. Accordingly, problem (P1.1) is a convex optimization problem. However, it is still intractable to solve problem (P1.1) via standard convex optimization techniques such as the interior point method\cite{convex_optimization}. This is due to the fact that constraint (\ref{transformNOMA}) represents a number of $(2^K-1)$ inequality constraints, thus making (P1.1) extremely difficult to be solved when $K$ is sufficiently large.
\subsection{Optimal Solution to Problem (P1.1) or (P1)}
As problem (P1.1) is convex and satisfies the Slater's condition, the strong duality holds between this problem and its dual problem\cite{convex_optimization}. Therefore, we leverage the Lagrange duality method to obtain the optimal solution to problem (P1.1). Let $\lambda_k\geq0, \mu_k\geq0,k\in\mathcal{K}$, denote the dual variables associated with the $k$-th constraints in (\ref{a}) and (\ref{max_t_loc}), respectively. We define $\mv \lambda\triangleq[\lambda_1,...,\lambda_K]^\dag$ and $\mv \mu\triangleq[\mu_1,...,\mu_K]^\dag$. Let $\nu\geq0$ denote the dual variable associated with constraint (\ref{timecontraint}). The partial Lagrangian of problem (P1.1) is given by
\begin{align}
&\mathcal{L}_1(\mv f, \mv s, \mv e, t_{\text{loc}}, t_{\text{up}}, \mv \lambda, \mv \mu, \nu)\nonumber\\
&=\sum_{k=1}^{K}\left(MN\frac{F_k}{C_k}\varsigma_kf_{k}^2
+\frac{F_k}{C_k}\frac{1}{f_{k}}\mu_k\right)
+\sum_{k=1}^{K}\lambda_k(S-s_k)+t_{\text{loc}}(\nu MN-\sum_{k=1}^{K} \mu_k)\nonumber\\
&+M\sum_{k=1}^{K}e_k+\nu Mt_{\text{up}}-\nu T.\label{Lg}
\end{align}

Then the dual function of problem (P1.1) is
\begin{align}
g_1(\mv \lambda, \mv \mu, \nu)
=&\min_{\mv f, \mv s,\mv e, t_{\text{loc}}, t_{\text{up}}}\mathcal{L}_1(\mv f, \mv s, \mv e, t_{\text{loc}}, t_{\text{up}}, \mv \lambda, \mv \mu, \nu)\label{dual_function}\nonumber\\
\text{s.t.}~~~
&\mv s \in \mathcal{C}(\mv e, t_{\text{up}}) \nonumber\\
&0\leq f_{k}\leq f_k^{\text{max}},\forall k\in\mathcal{K}\nonumber\\
&0\leq e_k \leq P_{\text{max}}t_{\text{up}},\forall k\in\mathcal{K}.
\end{align}
\begin{lemma}\label{t_loc}
In order for the dual function $g_1(\mv \lambda, \mv \mu, \nu)$ to be bounded from below (i.e. $g_1(\mv \lambda, \mv \mu, \nu))>-\infty$), it must hold that $(\nu MN-\sum_{k=1}^{K} \mu_k)\geq 0$.
\end{lemma}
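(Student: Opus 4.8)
The plan is to exploit the fact that, in the partial Lagrangian $\mathcal{L}_1$ of \eqref{Lg}, the auxiliary variable $t_{\text{loc}}$ appears in exactly one place, namely the linear term $t_{\text{loc}}\,(\nu MN-\sum_{k=1}^{K}\mu_k)$; all remaining terms depend only on $\mv f$, $\mv s$, $\mv e$, $t_{\text{up}}$ and the dual variables. Moreover, in the inner minimization \eqref{dual_function} defining $g_1$, the constraint \eqref{max_t_loc} that previously lower-bounded $t_{\text{loc}}$ has been absorbed into the objective via the multipliers $\mu_k$, so $t_{\text{loc}}$ is now essentially free (only required to be nonnegative) and may be driven to $+\infty$. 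This is the structural observation the whole argument rests on.

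I would then proceed by contraposition. Suppose $\nu MN-\sum_{k=1}^{K}\mu_k<0$. Fix any admissible choice of the remaining variables satisfying the constraints of \eqref{dual_function} --- for instance $f_k$ a small positive number, $e_k=0$, $t_{\text{up}}$ any positive value, and $s_k=0$ so that $\mv s\in\mathcal{C}(\mv e,t_{\text{up}})$ holds trivially --- so that the sum of all $t_{\text{loc}}$-independent terms in $\mathcal{L}_1$ equals some finite constant. Letting $t_{\text{loc}}\to+\infty$, the term $t_{\text{loc}}\,(\nu MN-\sum_{k=1}^{K}\mu_k)\to-\infty$, hence $\mathcal{L}_1\to-\infty$ along this sequence and therefore $g_1(\mv\lambda,\mv\mu,\nu)=-\infty$. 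Taking the contrapositive, $g_1(\mv\lambda,\mv\mu,\nu)>-\infty$ forces $\nu MN-\sum_{k=1}^{K}\mu_k\ge 0$, which is the claim. For completeness one can note the converse is immediate: when $\nu MN-\sum_{k=1}^{K}\mu_k\ge 0$ the optimal $t_{\text{loc}}$ is simply $t_{\text{loc}}=0$, and the residual minimization over the compact ranges of $\mv f$ and $\mv e$ and the convex perspective-type set $\mathcal{C}(\mv e,t_{\text{up}})$ is well-posed.

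I do not expect a genuine obstacle here; the proof is short. The only point that warrants a careful sentence is justifying that $t_{\text{loc}}$ is truly unbounded from above in \eqref{dual_function} --- which holds precisely because its only primal constraint, \eqref{max_t_loc}, has been dualized --- together with exhibiting one feasible assignment of the other variables for which the $t_{\text{loc}}$-independent part of $\mathcal{L}_1$ stays finite while $t_{\text{loc}}$ is sent to infinity. Once that is in place, the divergence of the single linear term does the rest.
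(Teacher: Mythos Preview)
Your proposal is correct and follows essentially the same approach as the paper's proof: assume $\nu MN-\sum_{k=1}^{K}\mu_k<0$ and send $t_{\text{loc}}\to\infty$ to force $g_1\to-\infty$. Your version is simply more detailed in justifying that $t_{\text{loc}}$ is unconstrained from above in \eqref{dual_function} and in exhibiting a feasible assignment of the remaining variables, but the core argument is identical.
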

\begin{proof}
Suppose that  $(\nu MN-\sum_{k=1}^{K} \mu_k)<0$. Then by setting $t_{\text{loc}}\rightarrow\infty$, we have $g_1(\mv \lambda, \mv \mu, \nu)\rightarrow-\infty$. Therefore, this lemma is proved.
\end{proof}
Accordingly, the dual problem of (P1.1) is
\begin{align}
\text{(D1.1)}:\max_{\mv \lambda, \mv \mu, \nu}~~&g_1(\mv \lambda, \mv \mu, \nu)\nonumber\\
\text{s.t.}~~~
&\lambda_k\geq0,\forall k\in\mathcal{K}\\
&\mu_k\geq0,\forall k\in\mathcal{K}\\
&\nu\geq0\\
&\nu MN-\sum_{k=1}^{K} \mu_k\geq 0.
\end{align}
In the following, we first solve problem (\ref{dual_function}) under given feasible $\mv \lambda, \mv \mu, \nu$ to obtain $g_1(\mv \lambda, \mv \mu, \nu)$, and then find the optimal $\mv \lambda, \mv \mu, \nu$ to maximize $g_1(\mv \lambda, \mv \mu, \nu)$ by solving problem (D1.1).
\subsubsection{Obtaining $g_1(\mv \lambda, \mv \mu, \nu)$ by Solving Problem (\ref{dual_function}) Under Given $\mv \lambda, \mv \mu$ and $ \nu$}
First, we decompose problem (\ref{dual_function}) into the following $(K+2)$ subproblems.
\begin{align}
&\min_{0\leq f_k\leq f_{k}^{\text{max}}}
~~MN\frac{F_k}{C_k}\varsigma_kf_{k}^2+\frac{F_k}{C_k}\frac{1}{f_{k}}\mu_k,~~\forall k\in\mathcal{K}.\label{solve_frequency}\\
&~~~\min_{t_{\text{loc}}}~~~~~t_{\text{loc}}(\nu MN-\sum_{k=1}^{K} \mu_k).\label{max_f_dual}\\
&~~\min_{\mv s,\mv e, t_{\text{up}}}~~\sum_{k=1}^{K}\lambda_k(S-s_k)
+M\sum_{k=1}^{K}e_k
+\nu Mt_{\text{up}}\label{polymatroid}\\
&~~~~~\text{s.t.}~~~\mv s \in \mathcal{C}(\mv e, t_{\text{up}})\nonumber\\
&~~~~~~~~~~~~0\leq e_k \leq P_{\text{max}}t_{\text{up}},\forall k\in\mathcal{K}.\nonumber
\end{align}

First, we present the optimal solutions to the subproblems in (\ref{solve_frequency}), which are obtained  based on the Karush-Kuhn-Tucker (KKT) conditions \cite{convex_optimization}.
\begin{lemma}\label{f_opt}
The optimal solution $f_k^*,k\in\mathcal{K},$ to each subproblem in (\ref{solve_frequency}) is given by
\begin{align}
f_k^*=\min(\sqrt[3]{\frac{\mu_k}{2MN\varsigma_k}}\label{opt_f},f_k^{\text{max}}).
\end{align}
\end{lemma}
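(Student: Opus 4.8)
The plan is to solve each scalar subproblem in (\ref{solve_frequency}) directly by one-dimensional convex optimization, exactly as the lemma's reference to the KKT conditions suggests. First I would fix $k$ and write the objective as $h_k(f_k)\triangleq MN\frac{F_k}{C_k}\varsigma_k f_k^2+\frac{F_k}{C_k}\frac{\mu_k}{f_k}$ on the domain $(0,f_k^{\text{max}}]$. Since $M,N,F_k,C_k,\varsigma_k>0$ and $\mu_k\ge 0$, this is a nonnegative combination of the strictly convex function $f_k^2$ and the convex function $1/f_k$, so $h_k$ is strictly convex on $(0,\infty)$; hence the subproblem is a convex program with a unique minimizer, and the KKT conditions are necessary and sufficient.

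Next I would compute the stationarity condition. Differentiating, $h_k'(f_k)=\frac{F_k}{C_k}\cdot\frac{2MN\varsigma_k f_k^3-\mu_k}{f_k^2}$, whose sign is that of $2MN\varsigma_k f_k^3-\mu_k$; since $f_k\mapsto f_k^3$ is strictly increasing, this is negative on $(0,\tilde f_k)$ and positive on $(\tilde f_k,\infty)$, where $\tilde f_k\triangleq\sqrt[3]{\mu_k/(2MN\varsigma_k)}$ is the unique positive root of the cubic $2MN\varsigma_k f_k^3=\mu_k$. Thus $\tilde f_k$ is the unconstrained global minimizer of $h_k$ over $(0,\infty)$, and $h_k$ is strictly decreasing to the left of it.

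Then I would bring in the box constraint by the standard case split. Introduce the multiplier $\beta_k\ge 0$ for $f_k\le f_k^{\text{max}}$; the constraint $f_k\ge 0$ is never active when $\mu_k>0$ because $h_k(f_k)\to\infty$ as $f_k\to 0^+$. If $\beta_k=0$, stationarity gives $f_k^*=\tilde f_k$, which is admissible precisely when $\tilde f_k\le f_k^{\text{max}}$. If $\beta_k>0$, complementary slackness forces $f_k^*=f_k^{\text{max}}$, and stationarity then yields $\beta_k=\frac{F_k}{C_k}\big(\frac{\mu_k}{(f_k^{\text{max}})^2}-2MN\varsigma_k f_k^{\text{max}}\big)$, which is $\ge 0$ exactly when $\tilde f_k\ge f_k^{\text{max}}$. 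Combining the two mutually exclusive cases gives $f_k^*=\min(\tilde f_k,f_k^{\text{max}})$, i.e. (\ref{opt_f}). Equivalently, one may avoid the multiplier entirely: if $\tilde f_k>f_k^{\text{max}}$ then $h_k$ is strictly decreasing on all of $(0,f_k^{\text{max}}]$, so the minimum is at the right endpoint $f_k^{\text{max}}$; otherwise $\tilde f_k$ is feasible and is the minimizer.

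I do not expect a real obstacle here, as this is an elementary convex minimization in a single variable; the only points needing care are the degenerate case $\mu_k=0$, in which $h_k$ reduces to the increasing quadratic $MN\frac{F_k}{C_k}\varsigma_k f_k^2$ minimized at $f_k^*=0=\tilde f_k$ so that (\ref{opt_f}) still holds, and the verification that the cubic $2MN\varsigma_k f_k^3=\mu_k$ has a single real positive root so that $\tilde f_k$ is well defined, which again follows from strict monotonicity of $f_k\mapsto f_k^3$.
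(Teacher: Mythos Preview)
Your proposal is correct and follows essentially the same KKT-based route as the paper's Appendix~A: the paper introduces multipliers $\underline{\tau}_k,\overline{\tau}_k$ for the box constraints, writes the stationarity condition $2MN\frac{F_k}{C_k}\varsigma_k f_k^*-\frac{F_k}{C_k}\frac{\mu_k}{(f_k^*)^2}-\underline{\tau}_k^*+\overline{\tau}_k^*=0$ together with complementary slackness, and then defers to ``simple manipulations,'' which are precisely the case split you carry out. Your treatment is in fact more complete, since you explicitly handle the degenerate case $\mu_k=0$ and give the alternative monotonicity argument that bypasses the multiplier for the lower bound.
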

\begin{proof}
See Appendix A.
\end{proof}

 Next, as for subproblem (\ref{max_f_dual}), since $(\nu MN-\sum_{k=1}^{K} \mu_k)\geq 0$ must hold based on Lemma 4.1, we can obtain that the optimal $t_{\text{loc}}^*$ equals to zero when $(\nu MN-\sum_{k=1}^{K} \mu_k)>0$, and can be any arbitrary real number when $(\nu MN-\sum_{k=1}^{K} \mu_k)=0$.

Then, we solve problem (\ref{polymatroid}). Towards this end, we have the following lemma from \cite{Tse_MAC}, for which the proof is omitted for brevity.

\begin{lemma}
For any given $\mv \lambda$ and $\mv p$, the optimal solution of
\begin{align}
\max_{\mv r}~~~&\sum_{k=1}^{K}\lambda_k r_k\nonumber\\
\text{s.t.}~~~&\mv r \in \mathcal{R}_{\text{NOMA}}(\mv p) \label{lemma}
\end{align}
is attained at a point $\mv r^{(\pi)}\triangleq[r_{\pi(1)}^{(\pi)},...,r_{\pi(K)}^{(\pi)}]$ of the polymatroid $\mathcal{R}_{\text{NOMA}}(\mv p)$, where the successive decoding order $\mv\pi$ is any feasible permutation such that $\lambda_{\pi(1)}\geq...\geq\lambda_{\pi(K)}$. Meanwhile, the achievable rate of $r_{\pi(k)}^{(\pi)},k\in\mathcal{K}$ is given by (\ref{rate}).\label{lemma1}
\end{lemma}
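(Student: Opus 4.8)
The plan is to recognize that $\mathcal{R}_{\text{NOMA}}(\mv p)$ is a \emph{polymatroid} and that maximizing a nonnegative linear functional over a polymatroid is solved exactly by the greedy algorithm. First I would define the set function $g(\bar{\mathcal{K}})\triangleq B\log_2\!\big(1+\sum_{i\in\bar{\mathcal{K}}}p_i h_i/\sigma^2\big)$ for $\bar{\mathcal{K}}\subseteq\mathcal{K}$, so that $\mathcal{R}_{\text{NOMA}}(\mv p)=\{\mv r\in\mathbb{R}_+^{K\times1}:\sum_{k\in\bar{\mathcal{K}}}r_k\le g(\bar{\mathcal{K}}),\ \forall\bar{\mathcal{K}}\subseteq\mathcal{K}\}$, and verify the three defining properties of a polymatroid rank function: (i) $g(\emptyset)=0$; (ii) monotonicity, $g(\mathcal{A})\le g(\mathcal{B})$ for $\mathcal{A}\subseteq\mathcal{B}$, which is immediate since enlarging $\bar{\mathcal{K}}$ only increases the argument of the logarithm; and (iii) submodularity, $g(\mathcal{A})+g(\mathcal{B})\ge g(\mathcal{A}\cup\mathcal{B})+g(\mathcal{A}\cap\mathcal{B})$, which follows because $x\mapsto B\log_2(1+x/\sigma^2)$ is concave and nondecreasing while $\bar{\mathcal{K}}\mapsto\sum_{i\in\bar{\mathcal{K}}}p_i h_i$ is modular, and a concave nondecreasing function of a modular set function is submodular.

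Next I would invoke Edmonds' greedy algorithm for polymatroids (see \cite{Tse_MAC}): to maximize $\sum_k\lambda_k r_k$ with $\lambda_k\ge0$ over a polymatroid with rank function $g$, pick any permutation $\mv\pi$ with $\lambda_{\pi(1)}\ge\cdots\ge\lambda_{\pi(K)}$ and set $r_{\pi(k)}^{(\pi)}=g(\{\pi(1),\dots,\pi(k)\})-g(\{\pi(1),\dots,\pi(k-1)\})$; the resulting vertex $\mv r^{(\pi)}$ is feasible and optimal. Substituting $g$, the $k$-th marginal increment is
\[
r_{\pi(k)}^{(\pi)} = B\log_2\!\Big(\sigma^2+\sum_{n=1}^{k}p_{\pi(n)}h_{\pi(n)}\Big) - B\log_2\!\Big(\sigma^2+\sum_{n=1}^{k-1}p_{\pi(n)}h_{\pi(n)}\Big),
\]
which is exactly (\ref{rate}) and is precisely the rate delivered by MMSE-SIC when decoding in the order $\mv\pi$. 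Feasibility of $\mv r^{(\pi)}$ (i.e. $\sum_{k\in\bar{\mathcal{K}}}r_k^{(\pi)}\le g(\bar{\mathcal{K}})$ for \emph{all} $\bar{\mathcal{K}}$, not only prefixes of $\mv\pi$) is the standard consequence of submodularity, and the sorted-order hypothesis is what matches the largest increments of $g$ with the largest weights; ties among the $\lambda_k$'s are immaterial since the corresponding coefficients below vanish.

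For a self-contained version avoiding a citation, I would instead run an exchange argument. Let $\mv r^*$ be any optimal point, fix a sorting permutation $\mv\pi$, set $\lambda_{\pi(K+1)}\triangleq0$ and $R_k\triangleq\sum_{n=1}^{k}r^*_{\pi(n)}$; summation by parts gives $\sum_{k}\lambda_{\pi(k)}r^*_{\pi(k)}=\sum_{k=1}^{K}(\lambda_{\pi(k)}-\lambda_{\pi(k+1)})R_k$. Since $\lambda_{\pi(k)}-\lambda_{\pi(k+1)}\ge0$ and the polymatroid constraint forces $R_k\le g(\{\pi(1),\dots,\pi(k)\})$, the objective is bounded above by $\sum_{k}(\lambda_{\pi(k)}-\lambda_{\pi(k+1)})g(\{\pi(1),\dots,\pi(k)\})$, and for $\mv r^{(\pi)}$ every one of these partial-sum constraints holds with equality, so the bound is attained; hence $\mv r^{(\pi)}$ is a maximizer. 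Either way, the one genuinely nontrivial step is the submodularity verification in (iii); once that is established, the identification of the greedy increment with the SIC rate (\ref{rate}) and the optimality of the vertex are routine.
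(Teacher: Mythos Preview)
Your proposal is correct and follows exactly the approach of the cited reference: the paper itself does not prove this lemma at all, stating only that it is taken from \cite{Tse_MAC} ``for which the proof is omitted for brevity.'' The polymatroid-plus-Edmonds-greedy argument you outline (and your alternative Abel-summation/exchange argument) is precisely the content of that reference, so you have filled in what the paper leaves out rather than deviated from it.
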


Based on lemma \ref{lemma1}, it follows that under any given $\mv\lambda$, $\mv e$ and $t_{\text{up}}$, the optimal solution of
\begin{align}
\max_{\mv s}~~&\sum_{k=1}^{K}\lambda_k s_k\nonumber\\
\text{s.t.}~~~&\mv s \in \mathcal{C}(\mv e, t_{\text{up}})\label{s_b}
\end{align}
is attained at $\mv s^{(\pi)}\triangleq[s_{\pi(1)}^{(\pi)},...,s_{\pi(K)}^{(\pi)}]$, where $\mv\pi$ is any feasible permutation such that $\lambda_{\pi(1)}\geq...\geq\lambda_{\pi(K)}$, and $s_{\pi(k)}^{(\pi)}$ is given by
\begin{align}
 s_{\pi(k)}^{(\pi)}=B\log_2
\left(\frac{\sigma^2+\frac{1}{t_{\text{up}}}\sum_{n=1}^{k}e_{\pi(n)} h_{\pi(n)}}
{\sigma^2+\frac{1}{t_{\text{up}}}\sum_{n=1}^{k-1}e_{\pi(n)} h_{\pi(n)}}\right)t_{\text{up}}, k\in\mathcal{K}\label{solve_s}.
\end{align}

Based on (\ref{solve_s}), it follows that solving subproblem (\ref{polymatroid}) is equivalent to optimizing $\mv e$ and $t^{\text{up}}$ by solving the following problem:
\begin{align}
\max_{\mv e,t_{\text{up}}}~~
&\sum_{k=1}^{K} \left(\big(\lambda_{\pi(k)}-\lambda_{\pi(k+1)}\big)B \log_2\big(1+\frac{\sum_{n=1}^{k}e_{\pi(n)} h_{\pi(n)}}{t_{\text{up}}\sigma^2}\big)t_{\text{up}}
-M e_k\right)-\nu Mt_{\text{up}}\nonumber\\
\text{s.t.}&~~~~~~~~~~~~~~~~~~~~0\leq e_k \leq P_{\text{max}}t_{\text{up}},\forall k\in\mathcal{K},\label{e_t}
\end{align}
where we define $\lambda_{K+1}\triangleq 0$ for notational convenience.

Notice that problem (\ref{e_t}) is a convex optimization problem with respect to $\mv e$ and $t_{\text{up}}$, and thus can be solved efficiently by standard convex solvers, e.g., CVX\cite{cvx}.\footnote{Note that the optimal $\mv e^*$ (i.e., the optimal transmission power $\mv p^*$ in problem (P1)) is unique due to the strict convexity of problem (\ref{e_t}). However, the optimal $\mv s^*$ is generally non-unique\cite{wangfeng_mec}. Therefore, an addition step is needed later for constructing a feasible primal solution to problem (P1.1) by time-sharing among different solutions. Here, we can arbitrarily choose one solution of $\mv s^*$ for obtaining the dual function.} Let $\mv e^*$ and $t_{\text{up}}^*$ denote the optimal solution to problem (\ref{e_t}), and we obtain  $\mv s^*$ based on (\ref{solve_s}). Accordingly, $\mv s^*$, $\mv e^*$ and $t_{\text{up}}^*$ become the optimal solution to problem (\ref{polymatroid}).

\subsubsection{Finding Optimal $\mv\lambda$, $\mv\mu$ and $\nu$ to solve (D1.1)}
Next, we find the optimal ($\mv\lambda^{\text{opt}},\mv \mu^{\text{opt}},\nu^{\text{opt}}$) to maximize $g_1(\mv \lambda, \mv \mu, \nu)$ for solving the dual problem (D1.1). As the dual function $g_1(\mv \lambda, \mv \mu, \nu)$ is always convex but generally non-differentiable, we solve problem (D1.1)  by using the subgradient-based methods, such as the ellipsoid method\cite{ellipsoid}. Notice that the subgradient of the objective function $g_1(\mv \lambda, \mv \mu, \nu)$ with respect to $(\mv\lambda,\mv\mu,\nu)$ is $[s_1^*-S,...,s_K^*-S,
t^*_{\text{loc}}-\frac{F_1}{C_1}\frac{1}{f_1^*},...,t^*_{\text{loc}}-\frac{F_K}{C_K}\frac{1}{f_K^*},
T-M(Nt^*_{\text{loc}}+t_{\text{up}}^*)]$.
\subsubsection{Constructing Optimal Primal Solution to (P1.1) or (P1)}
Based on the optimal $\mv\lambda^{\text{opt}}$, $\mv\mu^{\text{opt}}$ and $\nu^{\text{opt}}$, we need to construct the optimal primal solution to (P1.1), denoted by ($\mv f^{\text{opt}}$, $\mv s^{\text{opt}}$, $\mv e^{\text{opt}}$, $t^{\text{opt}}_{\text{loc}}$, $t_{\text{up}}^{\text{opt}}$ ). By solving problem (\ref{e_t}) under $\mv \lambda^{\text{opt}}$ and $\nu^{\text{opt}}$, we obtain the optimal solution of $\mv e^{\text{opt}}$ and $t_{\text{up}}^{\text{opt}}$ to problem (P1.1). Similarly, by substituting $\mv\lambda^{\text{opt}}$ into (\ref{opt_f}), we  obtain the optimal $\mv f^{\text{opt}}$. Based on the obtained optimal $\mv f^{\text{opt}}$ and the constraint (\ref{max_t_loc}), we obtain that the optimal local update delay $t_{\text{loc}}^{\text{opt}}=\max\limits_{k\in\mathcal{K}}\{\frac{F_k}{C_k}\frac{1}{f_k^{\text{opt}}}\}$.

Finally, we still need to determine the primal optimal $\mv s^{\text{opt}}$ and the corresponding optimal decoding order at the edge server. Let $\mv \pi^{\text{opt}}=[\pi^{\text{opt}}(1),...,\pi^{\text{opt}}(K)]^\dag$ denote the permutation that satisfies the condition $\lambda_{\pi(1)}^{\text{opt}}\geq...\geq\lambda_{\pi(K)}^{\text{opt}}\geq0$. In this case, the primal optimal decoding order is $\mv \pi^{\text{opt}}$ and the optimal $\mv s^{\text{opt}}$ can be obtained based on (\ref{solve_s}). With optimal $\mv s^{\text{opt}}$ and $t_{\text{up}}$ at hand, the optimal transmission rate $\mv r^{\text{opt}}$ is obtained accordingly. It should be emphasized that if there exist some $\lambda_k^{\text{opt}}$'s that equal to each other, then we may need to {\it time-sharing} among different decoding orders. In this case we construct the feasible and optimal $\mv s^{\text{opt}}$ to the problem (P1.1) (or equivalently the optimal solution $\mv r^{\text{opt}}$ to problem (P1))  by using the {\it time-sharing} technique, similarly as that adopted in \cite{time_sharing}\cite{wangfeng_noma}. For brevity, we skip the discussion about the time-sharing technique here, and please refer to \cite{time_sharing}\cite{wangfeng_noma} for details.

\section{Optimal Solution to Problem (P2) under TDMA}
In this section, we obtain the optimal solution to
problem (P2) under the TDMA case. First, we transform problem (P2)
into a convex form.
Similarly as for solving problem (P1) in Section IV, we introduce a set of variables $\widetilde{\mv e}\triangleq[\widetilde{e}_1,...,\widetilde{e}_K]^\dag$ with $\widetilde{e}_k=p_kt_k^{\text{up}}, \forall k\in\mathcal{K}$. Accordingly, we transform problem (P2) into the following convex form:
\begin{align}
\text{(P2.1)}:
\min_{\mv f,\widetilde{\mv e}, t_{\text{loc}},\{t_{k}^{\text{up}}\}} &M\sum_{k=1}^{K}\left(N\frac{F_k}{C_k}\varsigma_kf_{k}^2
+\widetilde{e}_k
\right)
\nonumber\\
\text{s.t.}~~
&B\log_2(1+\frac{\widetilde{e}_kh_k}{t_k^{\text{up}}\sigma^2})t_k^{\text{up}}\geq S, \forall k\in\mathcal{K}\label{tdma_rate_constraint}\\
&0\leq \widetilde{e}_k \leq P_{\text{max}}t_k^{\text{up}},\forall k\in\mathcal{K}\label{tdma_powercontraint}\\
&(\ref{max_t_loc}),(\ref{frequency}),(\ref{tdma_delay}).\nonumber
\end{align}

Next, we employ
the Lagrange duality method to obtain the optimal solution to the convex problem (P2.1).
Let $\omega_k\geq 0, k\in\mathcal{K}$, and $\zeta\geq0$ denote
the dual variables associated with the constraints (\ref{max_t_loc}) and (\ref{tdma_delay}), respectively, where we define $\mv \omega\triangleq[\omega_1,...,\omega_K]^\dag$ for convenience. Then the partial Lagrangian of problem (P2.1) is given by
\begin{align}
\mathcal{L}_2(\mv \omega,\zeta, \mv f, \widetilde{\mv e},  t_{\text{loc}},\{t_{k}^{\text{up}}\})
&=\sum_{k=1}^K
(MN\frac{F_k}{C_k}\varsigma_kf_{k}^2+\omega_k\frac{F_k}{C_k}\frac{1}{f_k})
+(MN\zeta-\sum_{k=1}^K\omega_k)t_\text{loc}\nonumber\\
&+M\sum_{k=1}^{K}(\widetilde{e}_k+\zeta t_k^{\text{up}}).\label{tdma_lag}
\end{align}
The dual function of problem (P2.1) is defined as
\begin{align}
g_2(\mv \omega, \zeta)=&\min_{\mv f, \widetilde{\mv e},  t_{\text{loc}},\{t_{k}^{\text{up}}\}}
\mathcal{L}_2\big(\mv \omega, \zeta, \mv f, \widetilde{\mv e},  t_{\text{loc}},\{t_{k}^{\text{up}}\}\big)\label{tdma_dual_function}\\
\text{s.t.}~&~~~~(\ref{frequency}),(\ref{tdma_rate_constraint}),(\ref{tdma_powercontraint}).\nonumber
\end{align}
Similar as in Lemma \ref{t_loc}, it follows that in order for the dual function $g_2(\mv \omega, \zeta)$ to be bounded from below (i.e., $g_2(\mv \omega, \zeta)>-\infty$), we must have $MN\zeta-\sum_{k=1}^K\omega_k\geq 0$.
Accordingly, the dual problem of (P2.1) is given as
\begin{align}
\text{(D2.1)}:\max_{\mv \omega, \zeta}~~&g_2(\mv \omega, \zeta)\nonumber\\
\text{s.t.}~~~
&\omega_k\geq0,\forall k \in \mathcal{K}\\
&\zeta\geq0\\
&MN\zeta-\sum_{k=1}^K\omega_k\geq 0.
\end{align}

In the following, we first solve problem (\ref{tdma_dual_function}) under any given feasible $\mv \omega, \zeta$ to obtain $g_2(\mv \omega, \zeta)$, and then find the optimal $\mv \omega, \zeta$ to maximize $g_2(\mv \omega, \zeta)$ by solving problem (D2.1).

First, we  decompose problem (\ref{tdma_dual_function}) into $(2K+1)$ subproblems.
\begin{align}
\min_{ 0\leq f_k\leq f_k^{\text{max}}}&~~MN\frac{F_k}{C_k}\varsigma_kf_{k}^2+\omega_k\frac{F_k}{C_k}\frac{1}{f_k}, ~~k\in\mathcal{K}\label{tdma_fre}.
\end{align}
\begin{align}
\min_{t_\text{loc}}~~~t_\text{loc}(\zeta MN-\sum_{k=1}^{K}\omega_k)\label{tdma_t_loc}.
\end{align}
\begin{align}
\min_{\widetilde{e}_k, t_k^\text{up}}~~& \widetilde{e}_k+\zeta t_k^{\text{up}},~~k\in\mathcal{K}\label{tdma_t_up}\\
\text{s.t.}~~~&(\ref{tdma_rate_constraint}),(\ref{tdma_powercontraint}).\nonumber
\end{align}

Similar as in Lemma \ref{f_opt}, the optimal solution $f_k^\star$ to $k$-th subproblem in (\ref{tdma_fre}) is given by
\begin{align}
f_k^\star=\min\big(\sqrt[3]{\frac{\omega_k}{2MN\varsigma_k}},f_k^\text{max}\big).\label{tdma_f_opt}
\end{align}

For subproblem (\ref{tdma_t_loc}), since $(MN\zeta-\sum_{k=1}^K\omega_k)\geq 0$ must hold in subproblem (\ref{tdma_t_loc}), the optimal solution $t^\star_{\text{loc}}$  is zero when $(MN\zeta-\sum_{k=1}^K\omega_k)> 0$, and can be any real number when $(MN\zeta-\sum_{k=1}^K\omega_k)= 0$.

Then, consider subproblem (\ref{tdma_t_up}). It is noted that $B\log_2(1+\frac{\widetilde{e}_kh_k}{t_k^{\text{up}}\sigma^2})t_k^{\text{up}}=S, \forall k\in\mathcal{K}$ must hold at the optimal solution, and thus we have
\begin{align}
\widetilde{e}_k=\big(2^\frac{S}{Bt_k^{\text{up}}}-1\big)\frac{t_k^{\text{up}}\sigma^2}{h_k}, \forall k\in\mathcal{K}.\label{e_k}
\end{align}
By substituting (\ref{e_k}) into (\ref{tdma_t_up}) and after some simple manipulations, the $k$-th subproblem in (\ref{tdma_t_up}) reduces to the following problem:
\begin{align}
\min_{t_k^\text{up}}~~&\big(2^\frac{S}{Bt_k^{\text{up}}}-1\big)\frac{t_k^{\text{up}}\sigma^2}{h_k}+\zeta t_k^{\text{up}}\label{t_k}\\
\text{s.t.}~~~&t_k^\text{up}\geq\frac{S}{B\log(1+\frac{P_\text{max}h_k}{\sigma^2})},~k\in\mathcal{K}.\nonumber
\end{align}
Problem (\ref{t_k}) is a convex optimization problem to $t_k^\text{up}$. By setting the first-derivative of the objective function to be zero, we have
\begin{align}
2^\frac{S}{Bt_k^\text{up}}\frac{\sigma^2}{h_k}(1-\frac{S\ln2}{Bt_k^\text{up}})-\frac{\sigma^2}{h_k}+\zeta=0.\label{bisection}
\end{align}
Let $\tau_k^\star, k\in\mathcal K,$ denote the solution to the above equation in (\ref{bisection}), which can be easily obtained via a bisection search. Then the optimal solution $t_k^{\text{up}^\star}$ to problem (\ref{t_k}) is given as
\begin{align}
t_k^{\text{up}^\star}=\max\left(\tau_k^\star,\frac{S}{B\log(1+\frac{P_\text{max}h_k}{\sigma^2})}\right),\forall k\in\mathcal K.
\end{align}
Based on optimal $t_k^{\text{up}^\star}$'s and (\ref{e_k}), the optimal $\widetilde{e}_k^\star$'s are obtained accordingly. Therefore, the dual function $g_2(\mv \omega, \zeta)$ in (\ref{tdma_dual_function}) is finally obtained.

Next, with obtained $\mv f^\star$, $t_{\text{loc}}^\star$, $\{t_k^{\text{up}^\star}\}$ and $\widetilde{\mv e}^\star$ at hand, we  employ the ellipsoid method to find the optimal $\mv\omega^{\text{opt}}$ and $\zeta^\text{opt}$ to maximize $g_2(\mv \omega, \zeta)$ for solving problem (D2.1).

Finally, based on the obtained optimal $\mv \omega^{\text{opt}}$ and $\zeta^\text{opt}$, we need to construct the optimal primal solution to (P2.1), denoted by $\mv f^{\text{opt}},\widetilde{\mv e}^{\text{opt}}, \{t_k^{\text{up}^\text{opt}}\}$, and $t_\text{loc}^{\text{opt}}$. By substituting $\mv \omega^\text{opt}$ into (\ref{tdma_f_opt}), we can obtain the optimal $\mv f^\text{opt}$. Based on the obtained optimal $\mv f^\text{opt}$ and constraint (\ref{max_t_loc}), we obtain the optimal local update delay as $t_\text{loc}^\text{opt}
=\max\limits_{k\in\mathcal{K}}\{\frac{F_k}{C_k}\frac{1}{f_k^{\text{opt}}}\}$. By resolving equation (\ref{bisection}) under optimal $\zeta^\text{opt}$, we obtain the optimal $t_k^{\text{up}^\text{opt}}$, and thus $\widetilde{\mv e}^{\text{opt}}$ based on (\ref{e_k}) accordingly. Therefore, problem (P2.1) (or equivalently (P2)) is finally solved.

\section{Numerical Results}
In this section, we present numerical results to validate the performance of our proposed energy-efficient federated edge learning design. In the simulation, we consider the quasi-static channel model, where the path loss from the edge server to edge device $k\in\mathcal{K}$ is given by $\beta_0(\frac{d_k}{d_0})^{-\alpha_0}$. Here, $d_k$ denotes the corresponding distance, $\alpha_0=3$ denotes the path loss exponent, and $\beta_0=-30$ dB denotes the channel power gain at a reference distance of $d_0=1$ m. We set the system bandwidth for ML-parameters uploading as $B = 2$ MHz and the noise power at the edge server as $\sigma^2 = -100$ dBm. We consider the scenario with $K=3$ edge devices, where the distances from the edge devices to edge server are $d_1=100$ m, $d_2=150$ m and $d_3=200$ m, respective, unless stated otherwise. Furthermore, we consider that the ML-model is a CNN\footnote{We consider a similar CNN structure as in \cite{S.Wang_Edge_learning}, where $a\approx6$ GPLOPs based on \cite{FLOPs}. Furthermore, suppose that each parameter of the training model is quantized into $12$ bits, and as a result, we have $S\approx4.9$ Gbits accordingly.}, which is trained by using the distributed BGD method. We consider the MNIST data set, and all the three edge devices have $1000$ data samples in total.

 \begin{figure}
  \centering
  \includegraphics[width=10cm]{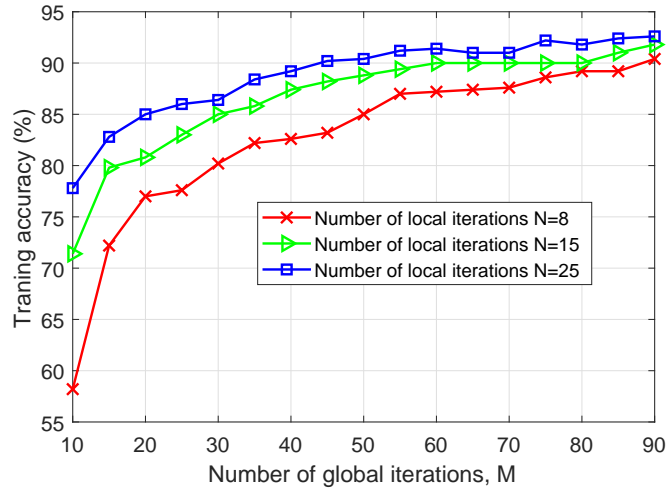}\\
  \small\caption{\small The average training accuracy versus the number of global iterations $M$ under different numbers of local iterations $N$.}\label{fig_accuracy}
\end{figure}
\begin{figure}
  \centering
  \includegraphics[width=10cm]{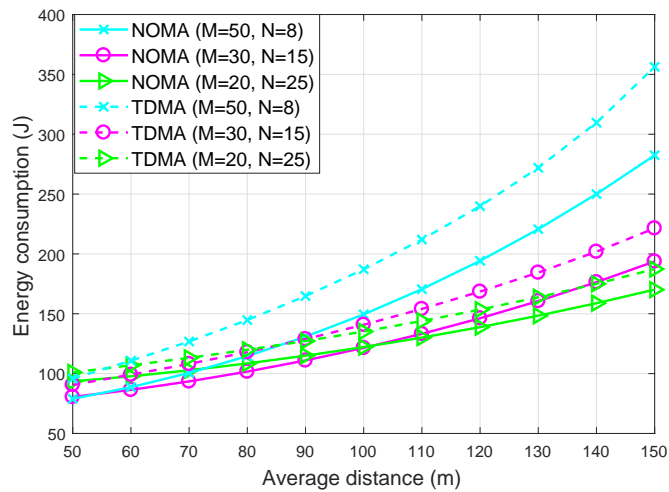}
    \small\caption{\small The energy consumption at edge devices versus the average distance between the edge server and edge devices, under different values of $M$ and $N$, where $T=166$ s.}
    \label{fig_distance}
     \vspace{0em}
\end{figure}
\begin{figure}
  \centering
  \includegraphics[width=10cm]{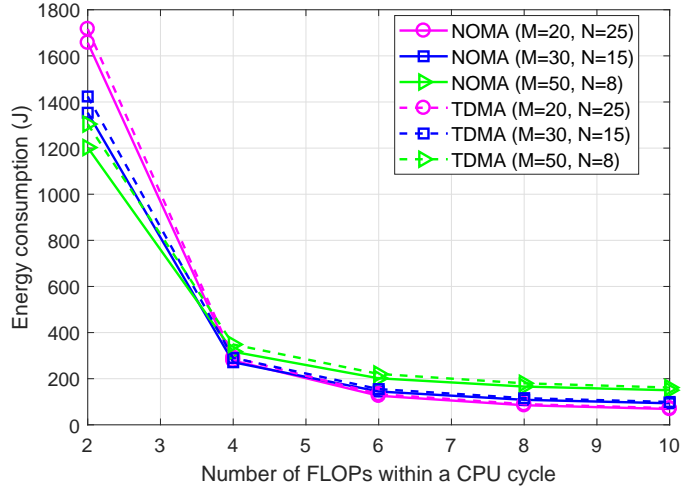}\\
  \small\caption{\small The energy consumption at edge devices versus the number of FLOPs within a CPU cycle $C=C_k, \forall k\in\mathcal{K}$, under different values of $M$ and $N$, where $T=850$ s.}\label{fig_c}
\end{figure}

Fig. \ref{fig_accuracy} shows the training accuracy versus the number of global iterations $M$, under different number of local iterations $N$. It is observed that under a given number of local (global) iterations $N$ $(M)$, the training accuracy increases as the number of global (local) iterations $M$ $(N)$ becomes larger. It is also observed that different values of $M$ and $N$ can be adopted to achieve the same training accuracy. For instance, to achieve the training accuracy of $85\%$, $(M,N)=(50,8)$, $(30,15)$ and $(25,20)$ can be adopted. This shows that more local iterations of computation (with larger $N$) can be used to trade for less global iterations of communication (with smaller $M$).

Next, we show the energy consumption of edge devices under different numbers of global and local iterations $M$ and $N$ to achieve the same training accuracy of $85\%$. Based on Fig. \ref{fig_accuracy}, we choose three pairs of parameters $(M, N)=(50,8)$, $(30,15)$ and $(20,25)$. Fig. \ref{fig_distance} shows the energy consumption at edge devices versus the average distance between the edge server and edge devices, under different values of $M$ and $N$. Here, the distances with  three edge devices are set as an arithmetic sequence, with the common difference being $45$ m (e.g., when the average distance is $100$m, then the distances from the three edge devices to the edge server are $d_1=55$ m, $d_2=100$ m and $d_3=145$ m, respectively). It is observed that the NOMA transmission always outperforms the TDMA. This is consistent with Remark \ref{outperform}.
It is also observed that under both NOMA and TDMA, when the average distance is short (e.g., shorter than $50$m), $(M=50, N=8)$ and $(M=30, N=15)$ lead to less energy consumption than $(M=20, N=25)$. This is due to the fact that in this case, the communication energy consumption for global ML-parameters uploading is small and thus the total energy consumption is dominated by the computation part for local ML-parameters update. Therefore, a small value of $N$ is desirable. By contrast, when the average distance increases (e.g., larger than $100$m), the opposite is observed to be true.

 Fig. \ref{fig_c} shows the energy consumption at edge devices versus the number of FLOPs within a CPU cycle $C=C_k, \forall k\in\mathcal{K}$, under different values of $M$ and $N$. It is observed that when the number of FLOPs within a CPU cycle is small or the computation capacities at edge devices are limited (e.g., $C_k=2$), the parameters of $(M=50, N=8)$ lead to lower energy consumption than that by $(M=30, N=15)$ and  $(M=20, N=25)$. This is due to the fact that under small value of $C$, higher CPU frequencies are generally needed for meeting the training deadline requirement, and as a result, the computation energy consumption for local ML-parameters update becomes the dominant part of the total energy consumption at edge devices. As a result, a smaller value of $N$ is preferred. By contrast, when $C$ becomes large (e.g., $C=10$), the parameters of $(M=20,N=25)$ are observed to outperform $(M=50,N=8)$ and $(M=30,N=15)$.

Then, we compare the performance of our proposed joint communication and computation design versus the following benchmark schemes. For comparison, in the following we fix the numbers of global and local iterations as $M=20$ and $N=25$, respectively, under the training accuracy requirement of $85\%$.
\begin{itemize}
  \item {\it Communication design only}: Each edge device $k\in\mathcal{K}$ locally updates their ML-parameters by using the maximum CPU frequency and only optimizes the transmission power and rate for global ML-parameters aggregation in the uploading process. Under NOMA and TDMA cases, the transmission powers and rates at the edge devices can be obtained by solving problems (P1) and (P2) under given $f_k=f_k^{\text{max}}, \forall k \in \mathcal{K}$, respectively.
  \item {\it Computation design only}: Each edge device $k\in\mathcal{K}$ updates the local ML-parameters to the edge server by using the maximum transmission power and only optimizes the CPU frequencies during the local ML-parameters update. Under NOMA and TDMA, the CPU frequencies at the edge devices can be obtained by solving problem (P1) and (P2) under given $p_k=P_{\text{max}}, \forall k \in \mathcal{K}$, respectively.
  \item {\it Training delay minimization}: The edge devices adopt the maximum transmission power and maximum CPU frequencies during the training process to minimize the delay for training the ML-model. This corresponds to solving problems (\ref{feasibility_noma}) and (\ref{c}) in Sections III-A and III-B, respectively.
\end{itemize}

Fig. \ref{max_fre_energy_compared} shows the energy consumption at edge devices versus the maximum CPU frequency $f^{\text{max}}=f_k^{\text{max}}, \forall k\in\mathcal{K}$, where $T=431$ s. It is observed that under both NOMA and TDMA, our proposed joint communication and computation designs outperform the other benchmark designs. This thus validates the benefit of our proposed designs. It is also observed that for our proposed designs under both NOMA and TDMA, the energy consumption at edge devices first decreases when $f^{\text{max}}$ increases from $0.5$ GHz to $1$ GHz, but keeps unchanged when it further increases. This is due to the fact that there exists an optimal CPU frequency between $0.5$ GHz to $1$ GHz, and therefore, further increasing the maximum CPU frequencies at edge devices cannot improve the energy efficiency. By contract, for the schemes with computation design only and training delay minimization, it is observed that increasing the maximum CPU frequency may lead to increased energy consumption. Furthermore, it is observed that for the two schemes with joint communication and computation design and communication design only, the NOMA transmission always outperforms the TDMA transmission in term of the energy efficiency; while for the other two schemes, NOMA may lead to higher energy consumption than TDMA.
\begin{figure}
  \centering
  \includegraphics[width=10cm]{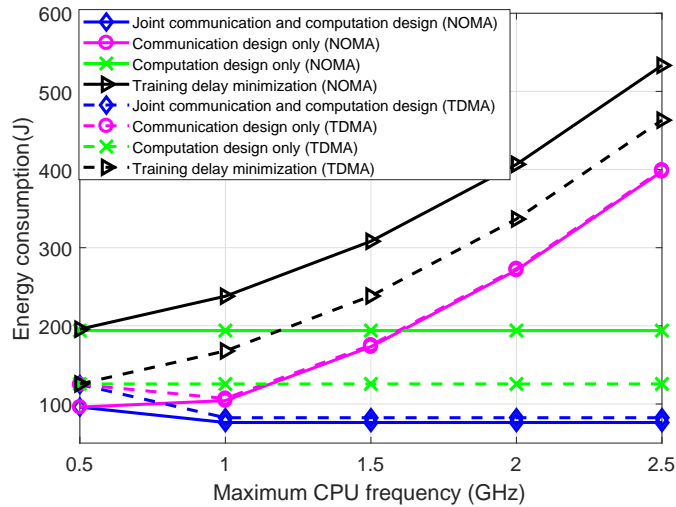}\\
  \small\caption{\small The energy consumption at edge devices versus the maximum CPU frequency $f^{\text{max}}=f_k^{\text{max}}, \forall k\in\mathcal{K}$, where $T=431$ s.}\label{max_fre_energy_compared}
\end{figure}

Fig. \ref{max_tr_energy_compared} shows the energy consumption at edge devices versus the maximum transmission power $P_\text{max}$, where $T=271$ s. Similar observations are made as in Fig. \ref{max_fre_energy_compared}.
\begin{figure}
  \centering
  \includegraphics[width=10cm]{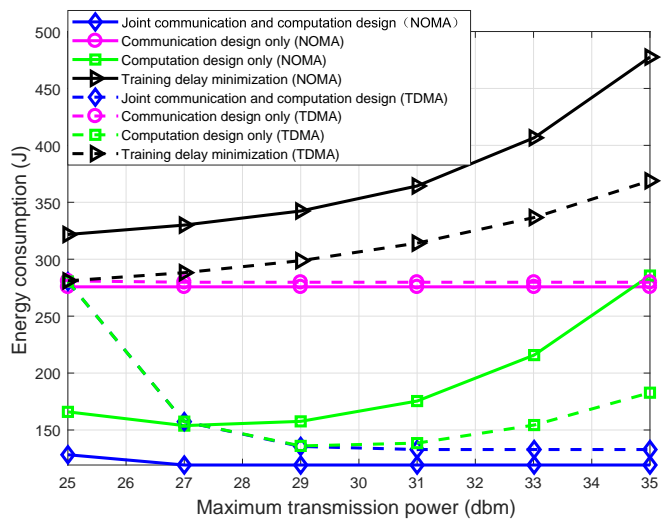}\\
  \small\caption{\small The energy consumption at edge devices versus the maximum transmission power $P_\text{max}$, where $T=271$ s.}\label{max_tr_energy_compared}
\end{figure}

\begin{figure}
  \centering
  \includegraphics[width=10cm]{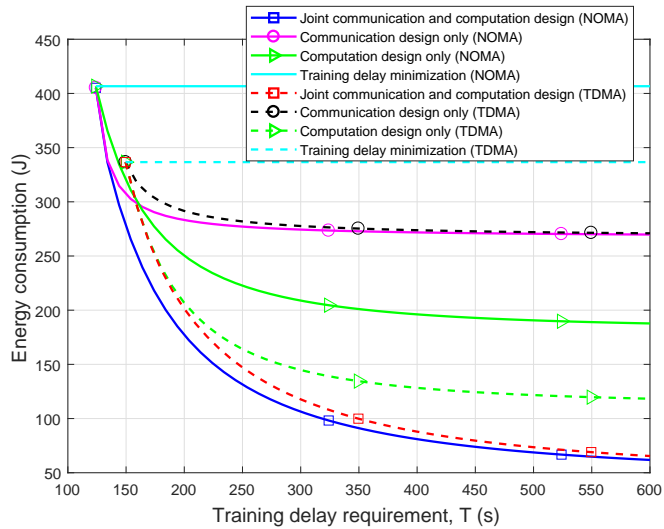}\\
  \small\caption{\small The energy consumption at edge devices versus the training delay requirement $T$.}\label{fig_energy_compared}
\end{figure}

Fig. \ref{fig_energy_compared} shows the energy consumption at edge devices versus the training delay requirement $T$. It is observed that our proposed joint computation and communication designs achieve significant performance gains over the other benchmarks under both NOMA and TDMA transmission protocols. It is also observed that NOMA is feasible when $T$ is larger than $124$ s, while TDMA is only feasible when $T$ is larger than $150$ s. This verifies that NOMA leads to shorter minimum training delay than TDMA, as shown in Remark \ref{outperform}. Furthermore, it is observed that when $T$ is sufficiently large, the performance gap between NOMA and TDMA becomes marginal.

\section{Conclusion}
In this paper, we investigated the energy efficient operation of a federated edge learning system, in which two transmission protocols for ML-parameters uploading, namely NOMA and TDMA, were considered. We minimized the total communication and computation energy consumption at edge devices subject to a given training delay requirement, by jointly designing the communication and computation resource allocations. Although the formulated problems were non-convex, we transformed them into convex forms, and then presented efficient algorithms to solve them optimally. Numerical results revealed interesting tradeoffs among energy consumption, training speed and training accuracy. It was shown that our proposed designs achieve significant performance gains over other benchmark schemes without such joint optimization. It was also shown that properly choosing the numbers of global and local iterations during the distributed training further helps enhance the system energy efficiency by properly balancing the communication-computation energy tradeoff.

\begin{appendices}
\section{Proof of Lemma \ref{f_opt}}
Under any given feasible $\mv \mu$, subproblem (\ref{solve_frequency}) is a convex optimization problem as the objective function is  a convex function with respective to $f_k$ and the constraint $0\leq f_k\leq f_k^{\text{max}}$ is linear for each edge device $k\in\mathcal{K}$. As subproblem (\ref{solve_frequency}) satisfies the Slater's condition, the strong duality holds between subproblem (\ref{solve_frequency}) and its dual problem.

Let $\underline{\tau}_k\geq0$ and $\overline{\tau}_k\geq0$ denote the Lagrange multipliers associated with the inequality $0\leq f_k$ and $f_k\leq f_k^{\text{max}}$, respectively. Then the Lagrangian of $k$-th subproblem in (\ref{solve_frequency}) is given by
\begin{align}
\mathcal{L}_k(f_k,\underline{\tau}_k,\overline{\tau}_k)
=&MN\frac{F_k}{C_k}\varsigma_kf_{k}^2+\frac{F_k}{C_k}\frac{1}{f_{k}}\mu_k
-\underline{\tau}_k f_k+\overline{\tau}_k(f_k-f_k^{\text{max}}).
\end{align}
Let $(\underline{\tau}_k^*,\overline{\tau}_k^*)$ denote the optimal dual solution and $f_k^*$ denote the optimal primal solution to subproblem (\ref{solve_frequency}). They should satisfy the KKT conditions \cite{convex_optimization}, which are given as
\begin{align}
&0\leq f_k^*,f_k^*\leq f_k^{\text{max}}, \underline{\tau}_k^*\geq0,\overline{\tau}_k^*\geq0\label{d}\\
&\underline{\tau}_k^* f_k^*=0, \overline{\tau}_k^*(f_k^*-f_k^{\text{max}})=0\label{e}\\
&2MN\frac{F_k}{C_k}\varsigma_kf_{k}^*-\frac{F_k}{C_k}\frac{1}{{f_{k}^*}^2}\mu_k
-\underline{\tau}_k^*+\overline{\tau}_k^*=0.\label{f}
\end{align}
Based on the above KKT conditions and  via some simple manipulations, this lemma can be easily verified.
\end{appendices}


\end{document}